\theoremstyle{definition}
\newtheorem{theorem}{Theorem}[section]
\newtheorem{lemma}[theorem]{Lemma}
\newtheorem{proposition}[theorem]{Proposition}
\newtheorem{definition}[theorem]{Definition}
\newcommand{\satisfying}[1]{\ensuremath{\mathsf{Sol}(#1)}}
\newcommand{\projectsatisfying}[2]{\ensuremath{\mathsf{Sol}(#1)_{\downarrow{#2}}}}
\newcommand{\range}[1]{\ensuremath{\mathsf{range}(#1)}}
\newcommand{\arijit}[1]{}
\newcommand{\sklmc}{\ensuremath{\mathsf{SkolemFC}}}
\newcommand{\base}{\ensuremath{\mathsf{Baseline}}}
\newcommand{\qc}{\ensuremath{\mathsf{QCounter}}}
\newcommand{\apxmc}{\ensuremath{\mathsf{ApproxMC}}}
\newcommand{\ganak}{\ensuremath{\mathsf{Ganak}}}
\newcommand{\unisamp}{\ensuremath{\mathsf{UniSamp}}}
\newcommand{\manthan}{\ensuremath{\mathsf{Manthan}}}
\newcommand{\ausampor}{\ensuremath{\mathsf{AlmostUniformSample}}}
\newcommand{\apxcountor}{\ensuremath{\mathsf{ApproxCount}}}
\newcommand{\excountor}{\ensuremath{\mathsf{Count}}}
\newcommand{\np}{\ensuremath{\mathsf{NP}}}
\newcommand{\sat}{\ensuremath{\mathsf{SAT}}}
\newcommand{\tot}{375}
\newcommand{\skolempsix}{\ensuremath{\mathsf{Skolem}(F,Y)}}
\newcommand{\countpsix}{\ensuremath{|\skolempsix|}}
\newcommand{\iter}{\ensuremath{{t}}}
\newcommand{\solproj}[2]{\ensuremath{\mathsf{Sol}(#1)_{\downarrow{#2}}}}
\newcommand{\est}{\ensuremath{\mathsf{Est}}}
\newcommand{\cam}[1]{{#1}}
\newcommand{\sol}{\ensuremath{\mathsf{Sol}}}
\newcommand{\aest}{\ensuremath{\mathsf{EEst}}}
\newcommand{\solcond}[1]{\ensuremath{\mathsf{Sol}(F \wedge (X=#1))}}
\newcommand{\solcondsig}{\solcond{\sigma}}
\newcommand{\epc}{\ensuremath{\varepsilon_c}}
\newcommand{\eps}{\ensuremath{\varepsilon_s}}
\newcommand{\oeps}{\ensuremath{(1 + \eps)}}
\newcommand{\la}{\ensuremath{\mathsf{LA}(\sigma)}}
\newcommand{\lc}{\ensuremath{\mathsf{LC}(\sigma)}}
\newcommand{\siglc}{\ensuremath{\displaystyle\sum_{\sigma \in S_2} \lc}}
\newcommand{\sigxlc}{\ensuremath{\displaystyle\sum_{\sigma \in 2^X} \lc}}
\newcommand{\sigslc}{\ensuremath{\displaystyle\sum_{\sigma \in S_s} \lc}}
\newcommand{\sigsla}{\ensuremath{\displaystyle\sum_{\sigma \in S_s} \la}}
\begin{document}
\title{\bf An Approximate Skolem Function Counter\thanks{  A preliminary version of this work appears at  Annual AAAI Conference on Artificial Intelligence, AAAI, 2024. The tool {\sklmc} is available at \url{https://github.com/meelgroup/skolemfc}.}}
\author{
\textbf{Arijit Shaw}\\
Chennai Mathematical Institute\\
IAI, TCG-CREST, Kolkata
\and
\textbf{Brendan Juba}\\
Washington University in St. Louis\\
\and
\textbf{Kuldeep S. Meel}\\
University of Toronto
}
\date{}
\maketitle

\begin{abstract}

One approach to probabilistic inference involves counting the number of models of a given Boolean formula. Here, we are interested in inferences involving higher-order objects, i.e., functions. We study the following task: Given a Boolean specification between a set of inputs and outputs, count the number of functions of inputs such that the specification is met. Such functions are called Skolem functions.

We are motivated by the recent development of scalable approaches to Boolean function synthesis. This stands in relation to our problem analogously to the relationship between Boolean satisfiability and the model counting problem. Yet, counting Skolem functions poses considerable new challenges. From the complexity-theoretic standpoint, counting Skolem functions is not only $\#P$-hard; it is quite unlikely to have an FPRAS (Fully Polynomial Randomized Approximation
Scheme) as the problem of synthesizing a Skolem function remains challenging, even given access to an NP oracle.

The primary contribution of this work is the first algorithm, {\sklmc}, that computes an estimate of the number of Skolem functions. {\sklmc} relies on technical connections between counting functions and propositional model counting: our algorithm makes a linear number of calls to an approximate model counter and computes an estimate of the number of Skolem functions with theoretical guarantees. Moreover, we show that Skolem function count can be approximated through a polynomial number of calls to a {\sat} oracle. Our prototype displays impressive scalability, handling benchmarks comparably to state-of-the-art Skolem function synthesis engines, even though counting all such functions ostensibly poses a greater challenge than synthesizing a single function.

\end{abstract}

\section{Introduction}

Probabilistic inference problems arise throughout AI and are tackled algorithmically by casting them as problems such as model counting~\cite{GSS21,CMV21}. In this work, we are interested in approaching inference questions for higher-order objects, specifically \emph{Skolem functions}: that is, we wish to compute the number of possible Skolem functions for a given specification $F(X, Y)$. \cam{Counting Skolem functions is the natural analog of \#SAT for Skolem functions, and has been recently explored in the context of counting level-2 solutions for valid Quantified Boolean Formulas (QBFs)~\cite{PMS23}.}

More precisely, recall that given two sets $X = \{x_1, \dots , x_n\}$ and $Y = \{y_1, \dots , y_m\}$ of variables and a Boolean formula $F(X, Y)$ over $X \cup Y$, the problem of Boolean functional synthesis is to compute a vector $\Psi = \langle\psi_1, \dots , \psi_m\rangle$ of Boolean functions $\psi_i$, often called Skolem functions, such that $\exists Y F(X, Y) \equiv F(X, \Psi(X))$. Informally, given a specification between inputs and outputs, the task is to synthesize a function vector $\Psi$ that maps each assignment of the inputs to an assignment of the outputs so that the combined assignment meets the specification (whenever such an assignment exists).
Skolem synthesis is a fundamental problem in formal methods and has been investigated by theoreticians and practitioners alike over the past few decades. The past few years have witnessed the development of techniques that showcase the promise of scalability in their ability to handle challenging specifications~\cite{JLH09,TV17,RTRS18,AACKRS19,GSRM21}.

The scalability of today's Skolem synthesis engines is reminiscent of the scalability of SAT solvers in the early 2000s. Motivated by the scalability of SAT solvers~\cite{FHIJ21}, researchers sought algorithmic frameworks for problems beyond satisfiability, such as MaxSAT~\cite{ABL13, LM21}, model counting~\cite{GSS21,CMV21}, sampling~\cite{CMV14}, and the like. The development of scalable techniques for these problems also helped usher in new applications, even though the initial investigation had not envisioned many of them.
In a similar vein, motivated in part by this development of scalable techniques for functional synthesis, we investigate the Skolem counting problem. We observe in Section~\ref{sec:application} that algorithms for such tasks also have potential applications in security and the engineering of specifications.
Being a natural problem, we will see that our study also naturally leads to deep technical connections between counting functions and counting propositional models and the development of new techniques, which is of independent interest.

Counting Skolem functions indeed raises new technical challenges. The existing techniques developed in the context of propositional model counting either construct (implicitly or explicitly) a representation of the space of all models~\cite{T06,SRMM19,DPV20} or at least enumerate a small number of models~\cite{CMV13,CMV16, SM19,YM23}, which in practice amounts to a few tens to hundreds of models of formulas constrained by random XORs. Such approaches are unlikely to work efficiency in the context of Skolem function counting, where even finding one Skolem function is hard.%

\subsection{Technical Contribution}

The primary contribution of this work is the development of a novel algorithmic framework, called {\sklmc}, that approximates the Skolem function count with a theoretical guarantee, using only linearly many calls to an approximate model counter and almost-uniform sampler.
First, we observe that Skolem function counting can be reduced to an exponential number of model counting calls, serving as a baseline Skolem function counter.
The core technical idea of {\sklmc} is to reduce the problem of approximate Skolem function counting to only linearly many (in $m=|Y|$) calls to propositional model counters. Of particular note is the observation that {\sklmc} can provide an approximation to the number of Skolem functions without enumerating even one Skolem function.
As approximate model counting and almost-uniform sampling can be done by logarithmically many calls to {\sat} oracle, we show that Skolem function counting can also be reduced to polynomially many calls to a {\sat} oracle.

To measure the impact of the algorithm, we implement {\sklmc} and demonstrate its potential over a set of benchmarks arising from prior studies in the context of Skolem function synthesis. Out of 609  instances, {\sklmc} could solve {\tot} instances, while a baseline solver could solve only eight instances. For context, the state-of-the-art Skolem function synthesis tool {\manthan$\mathsf{2}$}~\cite{GSRM21} effectively tackled 509 instances from these benchmarks, while its precursor, {\manthan}~\cite{GRM20}, managed only 356 instances  with a timeout of 7200 seconds.
\subsection{Applications}\label{sec:application}
This problem arises in several potential application areas.

\paragraph{Specification engineering.}
The first and primary motivation stems from the observation that specification synthesis~\cite{ADG16,PFMD21} (i.e., the process of constructing $F(X, Y)$) and function synthesis form part of the iterative process wherein one iteratively modifies specifications based on the functions that are constructed by the underlying engine. In this context, one helpful measure is to determine the number of possible semantically different functions that satisfy the specification, as often a large number of possible Skolem functions indicates the vagueness of specifications and highlights the need for strengthening the specification. Note that the use of the count is qualitative here, and hence an approximate order of magnitude (log count) suffices.

\paragraph{Diversity at the specification level.}
In system security and reliability, a classic technique is to generate and use a diverse variety of functionally equivalent implementations of components~\cite{baudry2015multiple}. Although classically, this is achieved by transformations of the code that preserve the function computed, we may also be interested in producing a variety of functions that satisfy a common specification. Unlike transformations on the code, it is not immediately clear whether a specification even admits a diverse collection of functions -- indeed, the function may be uniquely defined. Thus, counting the number of such functions is necessary to assess the potential value of taking this approach, and again a rough order of magnitude estimate suffices. Approximate counting of the functions may also be a useful primitive for realizing such an approach.

\paragraph{Uninterpreted functions in SMT.}
A major challenge in the design of counting techniques for SMT~\cite{CDM15, CMMV16} lies in handling uninterpreted functions~\cite{KS16}. Since Skolem functions capture a restricted but large enough class of uninterpreted functions (namely, the case where a given uninterpreted function is allowed to depend on all $X$ variables), progress in Skolem function counting is needed if we hope to make progress on the general problem of counting of uninterpreted functions in SMT.

\paragraph{Evaluation of a random Skolem function.}
Although synthesis of Skolem functions remains challenging in general, we note that approximate counting enables a kind of incremental evaluation by using the standard techniques for reducing sampling to counting. More concretely, given a query input, we can estimate the number of functions that produce each output: this is trivial if the range is small (e.g., Boolean), and otherwise, we can introduce random XOR constraints to incrementally specify the output. Once an output is specified for the query point, we may retain these constraints when estimating the number of consistent functions for subsequent queries, thereby obtaining an approximately uniform function conditioned on the answers to the previous queries. %

\subsection{Organization}
The rest of the paper is organized as follows:  we discuss related work in Section~\ref{sec:related} and present notation and preliminaries in  Section~\ref{sec:background}. We then present the primary technical contribution of our work in Section~\ref{sec:algo}. We present the empirical analysis of the prototype implementation of {\sklmc} in Section~\ref{sec:results}. We finally conclude in Section~\ref{sec:concl}.
\section{Related Work}\label{sec:related}
Despite the lack of prior studies focused on the specific problem of counting Skolem functions, significant progress has been made in synthesizing these functions. Numerous lines of research have emerged in the field of Skolem function synthesis.
The first, incremental determinization, iteratively pinpoints variables with distinctive Skolem functions, making decisions on any remaining variables by adding provisional clauses that render them deterministic~\cite{RS16,RTRS18,R19}. The second line of research involves obtaining Skolem functions by eliminating quantifiers using functional composition and reducing the size of composite functions through the application of Craig interpolation~\cite{JLH09,J09}. The third, CEGAR-style approaches, commence with an initial set of approximate Skolem functions, and proceed to a phase of counter-example guided refinement to improve upon these candidate functions~\cite{JSCTA15,ACJS17,ACGKS18}.
Work on the representation of specification $F(X,Y)$ has led to efficient synthesis using ROBDD representation and functional composition~\cite{BJ11}, with extensions to factored specifications~\cite{TV17,CFTV18}. Notable advancements include the new negation normal form, SynNNF, amenable to functional synthesis~\cite{AACKRS19}.
Finally, a data-driven method has arisen~\cite{GRM20,GSRM21}, relying on constrained sampling to generate satisfying assignments for a formula $F$. %

The specifications for the functions are expressed in terms of quantified Boolean formulas (QBFs). The problem of counting the number of solutions in true  QBF formulas has been studied in different definitions.  \cite{BELM12} defined the problem of AllQBF - finding all assignments of the free variables of a given QBF such that the formula evaluates to true. CountingQBF~\cite{SMKS22} poses a similar query - counting the number of assignments of the free variables of a given QBF such that the formula evaluates to true. However, their relevance to counting functions isn't clear. \cam{In more recent work, \cite{PMS23} investigated the problem of \textit{level-2 solution counting} for QBF formulas and developed an enumeration-based approach for counting the number of level-2 solutions for $\forall \exists$-QBFs. In this work, the solutions are expressed as tree models, and for true formulas, the number of \textit{different} tree models is the same as the number of Skolem functions.}

A related problem in the descriptive complexity of functions definable by counting the Skolem functions for \emph{fixed} formulas have been shown to characterize $\#AC_0$~\cite{HV19}. By contrast, we are interested in the problem where the \emph{formula} is the input.
Our algorithm also bears similarity to the FPRAS proposed for the descriptive complexity class $\#\Sigma_1$ \cite{DHKV21}, which is obtained by an FPRAS for counting the number of \emph{functions} satisfying a DNF over atomic formulas specifying that the functions must/must not take specific values at specific points. Nevertheless, our problem is fundamentally different in that it is easy to find functions satisfying such DNFs, whereas synthesis of Skolem functions is unlikely to be possible in polynomial time.

\section{Notation and Preliminaries} \label{sec:background}

We use lowercase letters (with subscripts) to denote propositional variables and
uppercase letters to denote a subset of variables. The formula $\exists Y F (X, Y )$ is existentially quantified in $Y$ , where $X = \{x_1, \dots , x_n\}$ and $Y = \{y_1, \dots , y_m\}$. By $n$ and $m$ we denote the number of $X$ and $Y$ variables in the formula. Therefore, $n = |X|, m = |Y|$.
For simplicity, we write a formula $F(X,Y)$ as $F$ if the $X,Y$ is clear from the context.
A \emph{model} is an assignment (true or false) to all the variables in $F$, such that $F$ evaluates to true.
Let $\solproj{F}{S}$ denote the set of models of formula $F$ projected on $S \subseteq X \cup Y$. If $S = X \cup Y$, we write the set as $\sol(F)$.
Let $\sigma$ be a partial assignment for the variables $X$ of $F$. Then $\solcond{\sigma}$ denotes the models of $F$ where $X = \sigma$.

\paragraph{{\np} Oracles and {\sat} oracles.}
Given a Boolean formula $F$, an {\np} oracle determines the satisfiability of the formula. A SAT solver is a practical tool solving the problem of satisfiability. Following definition of Delannoy and Meel~\cite{DM22}, a {\sat} oracle takes in a formula $F$ and returns a satisfying assignment $\sigma$ if $F$ is satisfiable and $\bot$ otherwise. The {\sat} oracle model captures the behavior of the modern SAT solvers. %

\paragraph{Propositional Model Counting.}
Given a formula $F$ and a projection set $S$, the problem of model counting is to compute $|\solproj{F}{S}|$.
An  \emph{approximate model counter} takes in a formula $F$, projection set $S$, tolerance parameter $\varepsilon$, and confidence parameter $\delta$, and returns $c$ such that $\Pr\left[\frac{|\solproj{F}{S}|}{1+\varepsilon} \leq c \leq (1+\varepsilon) |\solproj{F}{S}| \right] \geq 1-\delta$. It is known that $\log(n)$ calls to a {\sat} oracle are necessary~\cite{CCKM23}  and sufficient~\cite{CMV16} to achieve $(\varepsilon, \delta)$ guarantees for approximately counting the models of a formula with $n$ variables.

\paragraph{Propositional Sampling.}
Given a Boolean formula $F$ and a projection set $S$, a \emph{sampler} is a probabilistic algorithm that generates a random element in \solproj{F}{S}.
An \emph{almost uniform sampler} $G$ takes a tolerance parameter $\varepsilon$ along with $F$ and $S$, and  guarantees $ \forall y \in \solproj{F}{S}, \frac{1}{ (1+\varepsilon) |\solproj{F}{S}|} \leq \Pr [G(F,S,\varepsilon) = y] \leq \frac{(1+\varepsilon)} {|\solproj{F}{S}|}$. Delannoy and Meel~\cite{DM22} showed, $\log(n)$ many calls to a {\sat} oracle suffices to generate almost uniform samples from a formula with $n$ variables.

\paragraph{Skolem Functions.}

Given a Boolean specification $F(X, Y)$ between set of inputs $X = \{x_1, \dots , x_n\}$ and vector of outputs $Y = \langle y_1, \dots , y_m\rangle$, a function vector $\Psi(X) = \langle \psi_1(X), \psi_2(X), \dots $ $, \psi_m(X) \rangle$ is a Skolem function vector if $y_i \leftrightarrow \psi_i(X)$ and $\exists Y F(X, Y) \equiv F(X, \Psi)$. We refer to $\Psi$ as the Skolem function vector and $\Psi_i$ as the Skolem function for $y_i$. We'll use the notation  $\skolempsix$  to denote the set of possible $\Psi(X)$ satisfying the condition   $\exists YF(X, Y) \equiv F(X, \Psi(X))$. Two Skolem function vectors $\Psi_1$ and $\Psi_2$ are different, if there exists an assignment $\sigma \in {\solproj{F}{X}}$ for which $\Psi_1(\sigma) \neq \Psi_2(\sigma)$.

For a specification $\exists Y F(X,Y)$, the number of Skolem functions itself can be as large as $2^{n\cdot 2^m}$,
and the values of $n$ and $m$ are quite large in many practical cases. Beyond being a theoretical possibility, the count of Skolem functions is often quite big, and such values are sometimes difficult to manipulate and store as 64-bit values. Therefore, we are interested in the logarithm of the counts, and define the problem of approximate Skolem function counting as following:

\paragraph{Problem Statement.} Given a Boolean specification $F(X,Y)$, tolerance parameter $\varepsilon$, confidence parameter $\delta$, let $\ell = \log(\countpsix)$, the task of approximate Skolem function counting is to give an estimate $\est$, such that
    $\Pr\left[{(1-\varepsilon){\ell}} \leq \est \leq (1+\varepsilon) \ell \right] \geq 1-\delta$.

In practical scenarios, the input \emph{specification} is often given as a quantified Boolean formula (QBF). The output of the synthesis problem is a function, which is expressed as a Boolean circuit. In our setting, even if two functions have different circuits, if they have identical input-output behavior, we consider them to be the same function. For example, let $f_1(x) = x$ and $f_2 = \neg(\neg x)$. We'll consider $f_1$ and $f_2$ to be the same function.

\paragraph{Illustrative Example.}
Let's examine a formula defined on three sets of variables $X,Y_0,Y_1$, where each set contains five Boolean variables, interpreted as five-bit integers: $\exists Y_0Y_1 F(X, Y_0Y_1)$, where   $F$ represents the constraint for factorization,  $X=Y_0\times Y_1, Y_0 \leq Y_1, Y_0 \neq 1$. The number of Skolem functions of $F$ gives the number of distinct ways to implement a factorization function for 5-bit input numbers.
There exist multiple $X$'s for which there are multiple factorizations: A Skolem function $S_1$ may factorize $12$ as $4 \times 3$ and a function $S_2$ may factorize $12$ as $2 \times 6$.

\paragraph{Stopping Rule Algorithm.}
Let $Z_1,Z_2, \dots$ denote independently and identically distributed (i.i.d.) random variables taking values in the interval $[0,1]$ and with mean $\mu$. Intuitively, $Z_t$ is the outcome of experiment $t$. Then the Stopping Rule algorithm (\Cref{alg:stop}) approximates $\mu$ as stated by \Cref{th:stop} \cite{DKLR95,DL97}.

\begin{algorithm}[tb]
    \caption{Stopping Rule $(\varepsilon, \delta)$}\label{alg:stop}
    \begin{algorithmic}[1]
        \State  $t \gets 0, x \gets 0, s \gets 4 \ln (2/\delta)(1+\varepsilon)/\varepsilon^2$

        \While{{$x < s$}}
        \State  $t \gets t +1$
        \State Generate Random Variable $Z_t$
        \State $x \gets x + Z_t$
        \EndWhile
        \State $\est \gets s/t$
        \State \Return \est
    \end{algorithmic}
\end{algorithm}

\begin{theorem}[Stopping Rule Theorem]\label{th:stop} For all $0 < \varepsilon \leq 2, \delta > 0$, if the Stopping Rule algorithm returns {\est}, then
    $\Pr [ \mu (1-\varepsilon) \leq \est \leq \mu (1+ \varepsilon) ] > (1-\delta) $.
\end{theorem}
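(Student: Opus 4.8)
Since the algorithm produces an output only upon termination, we may assume $\mu > 0$ (if $\mu = 0$ then every $Z_t$ is $0$ almost surely and the loop never halts). Write $S_t = \sum_{i=1}^{t} Z_i$ with $S_0 = 0$, and let $N$ be the value of the counter $t$ at the moment the loop exits, so that $\est = s/N$ with $s = 4\ln(2/\delta)(1+\varepsilon)/\varepsilon^2$. The plan is to bound separately the probability of \emph{overestimation}, $\est > \mu(1+\varepsilon)$, and of \emph{underestimation}, $\est < \mu(1-\varepsilon)$, by $\delta/2$ each, and then take a union bound.

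The key step is to convert these events, which concern the \emph{random} time $N$, into deviation events of the partial sums at \emph{fixed} times. Because the increments $Z_i$ are nonnegative, $N$ is exactly the first $t$ with $S_t \ge s$, so for every integer $t \ge 1$ we have the deterministic identities $\{N \le t\} = \{S_t \ge s\}$ and $\{N \ge t\} = \{S_{t-1} < s\}$. Rearranging $\est = s/N$, overestimation is the event $\{N < s/(\mu(1+\varepsilon))\}$, which, since $N$ is an integer, is contained in $\{N \le t^{+}\}$ for $t^{+} = \lceil s/(\mu(1+\varepsilon))\rceil - 1$; likewise, when $\varepsilon < 1$, underestimation is $\{N > s/(\mu(1-\varepsilon))\} \subseteq \{N \ge t^{-}\}$ for $t^{-} = \lfloor s/(\mu(1-\varepsilon))\rfloor + 1$ (when $\varepsilon \ge 1$ one has $\est > 0 \ge \mu(1-\varepsilon)$, so underestimation never happens). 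Via the identities above, these reduce to $\{S_{t^{+}} \ge s\}$ and $\{S_{t^{-}-1} < s\}$.

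These are now fixed-time tail events: $\mathbb{E}[S_{t^{+}}] = \mu t^{+}$ and $\mathbb{E}[S_{t^{-}-1}] = \mu(t^{-}-1)$, and by the choice of $t^{\pm}$ the threshold $s$ exceeds $\mu t^{+}$ by a multiplicative factor $(1+\varepsilon)$, and falls below $\mu(t^{-}-1)$ by a factor $(1-\varepsilon)$, up to an additive rounding slack of size $O(\mu) = O(1)$. I would then apply the multiplicative Chernoff bound for a sum of i.i.d.\ $[0,1]$-valued random variables; the feature that makes everything go through is that its exponent is proportional to the \emph{mean} $\mu t$, not to $t$, so that after substituting $\mu t^{\pm} \approx s/(1\pm\varepsilon)$ the unknown $\mu$ cancels and each tail is bounded by a quantity of the form $\exp(-\Theta(\varepsilon^{2} s/(1\pm\varepsilon)))$, a function of $\varepsilon$ and $s$ alone. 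With $s = 4\ln(2/\delta)(1+\varepsilon)/\varepsilon^{2}$ each such tail is at most $\delta/2$; the leading constant together with the extra factor $(1+\varepsilon)$ in $s$ are exactly what absorbs the rounding slack. A union bound then yields $\Pr[\mu(1-\varepsilon) \le \est \le \mu(1+\varepsilon)] > 1-\delta$.

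The two Chernoff estimates and the floor/ceiling bookkeeping are routine; the one point that needs care is selecting a \emph{relative}-error concentration inequality (Bernstein-type, exponent $\propto \mu t$) rather than an absolute-error one (Hoeffding, exponent $\propto t$), since only the former produces a bound free of $\mu$ — which is precisely why the stopping rule requires no a priori lower bound on $\mu$. Matching the exact constant in the stated value of $s$ (and handling $\varepsilon$ up to $2$) amounts to using the sharper constant $e-2$ as in \cite{DKLR95,DL97}.
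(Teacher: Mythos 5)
The paper does not actually prove this theorem: it imports it verbatim from \cite{DKLR95,DL97}, so there is no in-paper argument to compare against. Judged on its own terms, your plan is sound and is essentially a reconstruction of the Dagum--Karp--Luby--Ross proof. You correctly identify the two ideas that make the argument work: (i) because the increments are nonnegative, the stopping time $N$ satisfies the deterministic identities $\{N \le t\} = \{S_t \ge s\}$ and $\{N \ge t\} = \{S_{t-1} < s\}$, which converts the over/under-estimation events into fixed-time tail events at $t^{+}$ and $t^{-}-1$; and (ii) one must use a relative-error (Bernstein/multiplicative Chernoff) inequality whose exponent is proportional to $\mu t$, so that after substituting $\mu t^{\pm}\approx s/(1\pm\varepsilon)$ the unknown $\mu$ cancels and no a priori lower bound on $\mu$ is needed. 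The treatment of $\varepsilon \ge 1$ (the lower tail is vacuous since $\est > 0 \ge \mu(1-\varepsilon)$) is also right and is what lets the statement reach $\varepsilon \le 2$.

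One caution on the part you wave off as routine: the constant bookkeeping here is not a pure transcription of \cite{DKLR95}. Their threshold is $\Upsilon_1 = 1 + (1+\varepsilon)\cdot 4(e-2)\ln(2/\delta)/\varepsilon^2$, whereas the algorithm in this paper uses $s = 4\ln(2/\delta)(1+\varepsilon)/\varepsilon^2$ --- the factor $e-2$ is replaced by $1$ but the additive $+1$ is dropped, and for some admissible $(\varepsilon,\delta)$ (e.g.\ $\varepsilon$ near $2$, $\delta$ near $1$) this $s$ is \emph{smaller} than $\Upsilon_1$, so you cannot simply cite their theorem as a black box. You therefore do need to run the Bernstein computation explicitly: for the upper tail the exponent works out to at least $4\ln(2/\delta)/(2+2\varepsilon/3)$, which exceeds $\ln(2/\delta)$ exactly when $\varepsilon < 3$ (so the hypothesis $\varepsilon \le 2$ is doing real work), and for the lower tail (only needed when $\varepsilon<1$) the exponent is roughly $s\varepsilon^2/(2(1-\varepsilon)) \ge 2\ln(2/\delta)$, with enough headroom to absorb the $O(\mu)\le 1$ rounding slack from the floor. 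With those two estimates written out, your outline closes into a complete proof.
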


\paragraph{FPRAS.} A Fully Polynomial Randomized Approximation Scheme (FPRAS) is a randomized algorithm that, for any fixed $\varepsilon > 0$ and any fixed probability $\delta > 0$, produces an answer that is within a factor of $(1+\varepsilon)$ of the correct answer, and does so with probability at least $(1-\delta)$, in polynomial time with respect to the size of the input, $1/\varepsilon$, and $\log(1/\delta)$.

\section{Algorithm} \label{sec:algo}
In this section, we introduce the primary contribution of our paper: the {\sklmc} algorithm. The algorithm takes in a formula $F(X,Y)$ and returns an estimate for $\log(|\skolempsix|)$. We first outline the key technical ideas that inform the design of  {\sklmc} and then present the pseudocode for implementing this algorithm.

\subsection{Core Ideas} \label{subsec:coreidea}

Since finding even a single Skolem function is computationally expensive, our approach is to estimate the count of Skolem functions without enumerating even a small number of Skolem functions. The key idea is to observe that the number of Skolem functions can be expressed as a product of the model counts of formulas.
A Skolem function $\Psi \in \skolempsix$ is a function from  $2^{X}$ to $2^{Y}$. A useful quantity in the context of counting Skolem functions is to define, for every assignment $\sigma \in 2^{X}$, the set of elements in $2^{Y}$ that $\Psi(X)$ can belong to. We refer to this quantity as $\range{\sigma}$ and formally define it as follows:

\begin{definition}
    \begin{align*}
        \range{\sigma} = \begin{cases}
            \projectsatisfying{F\wedge(X=\sigma)}{Y}  & |\solcondsig| > 0  \\
            1 & \text{otherwise}
        \end{cases}
    \end{align*}
    \end{definition}

\begin{lemma}\label{lemma:sk}
    $\countpsix = \prod\limits_{\sigma \in 2^{X}} |\range{\sigma}|$
\end{lemma}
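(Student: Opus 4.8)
The plan is to show that a Skolem function vector is exactly a choice, for each input assignment $\sigma \in 2^X$, of an element of $\range{\sigma}$, and that these choices are independent across distinct $\sigma$. Then the count follows by the multiplication principle. So the core of the proof is establishing a bijection between $\skolempsix$ and the Cartesian product $\prod_{\sigma \in 2^X} \range{\sigma}$ (where for the $\sigma$ with no extension to a model of $F$, the ``choice'' is forced, contributing a factor of $1$).

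First I would recall the semantic characterization: $\Psi \in \skolempsix$ iff $\exists Y\, F(X,Y) \equiv F(X, \Psi(X))$. I would split on whether $\sigma$ satisfies $\exists Y\, F(X,Y)$, i.e., whether $|\solcondsig| > 0$. If $|\solcondsig| = 0$, then $\sigma \notin \exists Y\, F$, so the equivalence places no constraint on $\Psi(\sigma)$ beyond not creating a spurious model — but since $F(\sigma, y)$ is false for every $y$, in fact \emph{any} value of $\Psi(\sigma)$ keeps $F(\sigma, \Psi(\sigma))$ false, matching $\exists Y\, F$ at $\sigma$. One must be slightly careful here about the notion of ``different'' Skolem functions adopted in the preliminaries: two vectors are identified if they agree on all $\sigma \in \solprojf{X}$, i.e. on all $\sigma$ for which $\exists Y\, F$ holds. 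Hence the value of $\Psi$ on a $\sigma$ with $|\solcondsig| = 0$ is irrelevant to the identity of the Skolem function, which is exactly why $\range{\sigma}$ is defined to be (a set of size) $1$ in that case — it contributes no multiplicative factor. If instead $|\solcondsig| > 0$, the equivalence forces $F(\sigma, \Psi(\sigma))$ to be true, i.e. $\Psi(\sigma) \in \projectsatisfying{F \wedge (X = \sigma)}{Y} = \range{\sigma}$; conversely any such choice makes $F(\sigma, \Psi(\sigma))$ true, which agrees with $\exists Y\, F$ at $\sigma$.

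Assembling these observations: define the map sending $\Psi$ to the tuple $(\Psi(\sigma))_{\sigma \in 2^X}$ restricted to the coordinates where $|\solcondsig| > 0$ (freezing the others to an arbitrary canonical value). By the above, this lands in $\prod_{\sigma : |\solcondsig| > 0} \range{\sigma}$, it is well-defined on equivalence classes, and it is a bijection: injectivity is immediate since two vectors agreeing on all relevant $\sigma$ are the same Skolem function by definition, and surjectivity follows because given any tuple in the product, setting $\Psi(\sigma)$ to the tuple's value (and anything on the remaining $\sigma$) yields a vector satisfying $F(X, \Psi) \equiv \exists Y\, F$ by the pointwise argument. Therefore $\countpsix = \prod_{\sigma : |\solcondsig| > 0} |\range{\sigma}| = \prod_{\sigma \in 2^X} |\range{\sigma}|$, the last equality because the omitted factors are all $1$.

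The main obstacle — really the only subtle point — is getting the bookkeeping of the ``otherwise'' case right: one has to check that the definition's choice of $1$ for $|\solcondsig| = 0$ is consistent with the paper's convention that Skolem vectors differing only on non-solution inputs are considered equal, so that the product genuinely counts equivalence classes rather than raw function vectors. Once that is pinned down, everything else is the multiplication principle applied to an explicit bijection, and no real computation is needed. I would also state explicitly that $\exists Y\, F(X,Y) \equiv F(X, \Psi(X))$ is equivalent to the conjunction over all $\sigma \in 2^X$ of ``$F(\sigma, \Psi(\sigma))$ holds iff $\exists y\, F(\sigma, y)$ holds,'' since this pointwise reformulation is what makes the factorization transparent.
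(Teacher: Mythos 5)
Your proposal is correct and follows essentially the same route as the paper's proof: decompose the defining equivalence $\exists Y\, F(X,Y) \equiv F(X,\Psi(X))$ pointwise over $\sigma \in 2^X$, observe that the choices of $\Psi(\sigma)$ are independent and range over $\range{\sigma}$ when $|\solcondsig| > 0$, and invoke the convention identifying vectors that differ only on $\sigma \notin \solprojf{X}$ to justify the factor of $1$ in the remaining cases. Your write-up is in fact more careful than the paper's (which gestures at the bijection and the multiplication principle rather than spelling them out), but there is no substantive difference in approach.
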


\begin{proof}
    First of all, we observe that
    \begin{align*}
        \forall \sigma \in 2^{X}, \forall \pi \in \range{\sigma}, \exists \Psi \text{ s.t. } \Psi(\sigma) = \pi
    \end{align*}
    which is easy to see for all $\sigma \in 2^{X}$ for which there exists $\pi \in 2^{Y}$ such that $F(\sigma,\pi) = 1$. As for $\sigma \in 2^{X}$ for which there is no $\pi$ such that $F(\sigma,\pi) = 1$,
Skolem functions that differ solely on inputs $\sigma \notin \solproj{F}{X}$ are regarded as identical. Consequently, such inputs have no impact on the count of distinct Skolem functions, resulting in $\range{\sigma} = 1$ for these cases. Recall, {\skolempsix} is the set of all functions from $2^X$ to $2^Y$. It follows that
    $\countpsix = \prod\limits_{\sigma \in 2^{X}} |\range{\sigma}|$.
\end{proof}

~\Cref{lemma:sk} allows us to develop a connection between Skolem function counting and propositional model counting.
As stated in the problem statement, we focus on estimating $\log \countpsix$.
To formalize our approach, we need to introduce the following notation:

\begin{proposition}\label{prop:logsk}
    \begin{align*}
        &\log \countpsix = \sum\limits_{\sigma \in S_2} \log |\satisfying{F \wedge (X = \sigma)}|\\
        &\qquad \text{where, }  S_2 := \{\sigma \in 2^{X} \mid  |\solcondsig| \geq 2 \}
    \end{align*}

\end{proposition}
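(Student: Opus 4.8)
The plan is to reduce Proposition~\ref{prop:logsk} directly to Lemma~\ref{lemma:sk} by taking logarithms and discarding the terms that contribute $0$. Starting from $\countpsix = \prod_{\sigma \in 2^X} |\range{\sigma}|$, I would take $\log$ of both sides to obtain $\log\countpsix = \sum_{\sigma \in 2^X} \log |\range{\sigma}|$. The remaining work is to show that only the $\sigma$ with $|\solcondsig| \geq 2$ actually matter in this sum, i.e.\ that every other $\sigma$ contributes $\log 1 = 0$.

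To do this I would partition $2^X$ into three classes according to $|\solcondsig|$. First, if $|\solcondsig| = 0$, then by the definition of $\range{\sigma}$ we have $|\range{\sigma}| = 1$, so $\log|\range{\sigma}| = 0$. Second, if $|\solcondsig| = 1$ — meaning the specification pins down a unique output assignment $\pi \in 2^Y$ for this input — then $\range{\sigma} = \projectsatisfying{F\wedge(X=\sigma)}{Y}$ is a singleton, so again $|\range{\sigma}| = 1$ and $\log|\range{\sigma}| = 0$. (Here one should note that the projection onto $Y$ of a set of models all of which agree on $X=\sigma$ has the same cardinality as that set when $|\solcondsig|=1$, which is immediate.) Third, if $|\solcondsig| \geq 2$, then $|\range{\sigma}| = |\solproj{F\wedge(X=\sigma)}{Y}| = |\solcondsig| \geq 2$, since again fixing $X=\sigma$ makes the $X$-coordinates constant so projecting onto $Y$ is a bijection on the model set. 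Summing over all of $2^X$, the first two classes drop out and we are left exactly with $\sum_{\sigma \in S_2} \log |\solcondsig|$, which is the claimed identity.

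The only mildly delicate point — and the closest thing to an obstacle — is checking that $|\solproj{F\wedge(X=\sigma)}{Y}| = |\solcondsig|$, i.e.\ that projecting the conditioned model set onto $Y$ neither merges nor loses anything; this holds because every model of $F \wedge (X=\sigma)$ has its $X$-part fixed to $\sigma$, so the map restricting a model to its $Y$-coordinates is a bijection onto $\solproj{F\wedge(X=\sigma)}{Y}$. Everything else is bookkeeping: monotonicity of $\log$, the case split on $|\solcondsig| \in \{0,1,\geq 2\}$, and the observation that $\log 1 = 0$ kills the first two cases. So the proof is essentially a one-line consequence of Lemma~\ref{lemma:sk} once the projection-counting remark is in place.
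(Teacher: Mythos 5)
Your proof is correct and follows essentially the same route as the paper's: apply Lemma~\ref{lemma:sk}, take logarithms, and observe that every $\sigma \notin S_2$ contributes $\log 1 = 0$. You are in fact slightly more explicit than the paper, which silently identifies $|\range{\sigma}|$ with $|\satisfying{F \wedge (X=\sigma)}|$ via the projection bijection and does not separately spell out the $|\solcondsig| = 1$ case.
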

\begin{proof}
    From ~\Cref{lemma:sk}, we have 	$\countpsix = \prod\limits_{\sigma \in 2^{X}} |\range{\sigma}|$. Taking logs on both sides, partitioning $2^{X}$ into $ S_2$ and $2^X \backslash S_2$, and observing that $\log |\range{\sigma}| = 0 $ for $\sigma \notin S_2$, we get the desired result.
\end{proof}

\begin{algorithm}[!tb]
    \caption{\sklmc $(F(X,Y),\varepsilon, \delta)$}\label{alg:skolemmc}
    \begin{algorithmic}[1]
        \State $\varepsilon_f \gets 0.6\varepsilon, \delta_f \gets 0.4\delta, s \gets 4 \ln (2/\delta_f)(1+\varepsilon_f)/{\varepsilon_f}^2$

        \State $\varepsilon_s \gets 0.2\varepsilon, \delta_c \gets 0.4\delta/ms, \varepsilon_c \gets 4\sqrt{2} - 1$

        \State $\varepsilon_g \gets 0.1\varepsilon, \delta_g \gets 0.1\delta$

        \State $G(X,Y,Y') := F(X,Y) \wedge F(X,Y') \wedge (Y \neq Y')$ \label{algline:eqn}

        \While{$x < s$} \label{algline:loops}

        \State  $\sigma \leftarrow  {\ausampor}(G,X,\eps)$  \label{algline:sample}

        \State $c\leftarrow \log( {\apxcountor}(F \wedge (X = \sigma),\epc,\delta_c))/m$ \label{algline:count}

        \State $x \leftarrow x + c $ \label{algline:sum}

        \State $t \leftarrow t + 1 $ \label{algline:inct}

        \EndWhile \label{algline:loopn}

        \State $g \gets {\apxcountor}(G,X,\varepsilon_g, \delta_g)$ \label{algline:getgcnt}
        \State$\est \gets  s / {\iter} \times m \times g $ \label{algline:gets2est}

        \If{$(g\log(1+\epc) > 0.1\est)$} \Return  $\bot$ \EndIf \label{algline:countcheck}

        \State\Return $\est$

    \end{algorithmic}
\end{algorithm}

\subsection{Algorithm Description}

The pseudocode for {\sklmc} is delineated in ~\Cref{alg:skolemmc}. It accepts a formula $\exists Y F(X,Y)$, a tolerance level $\varepsilon$, and a confidence parameter $\delta$. The algorithm {\sklmc} then provides an approximation of $\log \countpsix$ following ~\Cref{prop:logsk}.
To begin, {\sklmc} almost-uniformly samples $\sigma$ from $S_2$ at random in line~\ref{algline:sample}, utilizing an almost-uniform sampler. Subsequently, {\sklmc} approximates $|\satisfying{F \wedge (X = \sigma)}|$ through an approximate model counter at line~\ref{algline:count}.
The estimate $\est$ is computed by taking the product of the mean of $c$'s and $|S_2|$. In order to sample $\sigma \in S_2$, {\sklmc} constructs the formula $G$ whose solutions, when projected to $X$ represent all the assignments $\sigma \in S_2$ (line~\ref{algline:eqn}). Finally, {\sklmc} returns the estimate {\est} as logarithm of the Skolem function count.

 The main loop of {\sklmc} (from lines \ref{algline:loops} to \ref{algline:loopn}) is based on the Stopping Rule algorithm presented in \Cref{sec:related}. The Stopping Rule algorithm is utilized to approximate the mean of a collection of i.i.d.\ random variables that fall within the range $[0,1]$. The method repeatedly adds the outcomes of these variables until the cumulative value reaches a set threshold $s$. This threshold value is influenced by the input parameters $\varepsilon$ and $\delta$. The result yielded by the algorithm is represented as $s/t$, where $t$ denotes the number of random variables aggregated to achieve the threshold $s$. In the context of {\sklmc}, this random variable is defined as $\log( {\apxcountor}(F \land (X = \sigma),\epc,\delta_c))/m$. Line~\ref{algline:countcheck} asserts that the error introduced by the approximate model counting oracle is within some specific bound.

\paragraph{Oracle Access.}
We assume access to approximate model counters and almost-uniform samplers as oracles. The notation {\apxcountor}$(F,P,\varepsilon,\delta)$ represents an invocation of the approximate model counting oracle on Boolean formula $F$ with a projection set $P$, tolerance parameter $\varepsilon$, and confidence parameter $\delta$.
${\ausampor}(F,S,\varepsilon)$ denotes an invocation of the almost uniform sampler on a formula $F$, with projection set $S$ and tolerance parameter $\varepsilon$.
The particular choice of values of $\varepsilon_s$, $\varepsilon_c$, $\delta_c$, $\varepsilon_g$, $\delta_g$ used in the counting and sampling oracle aids the theoretical guarantees.

\subsection{Illustrative Example}
We will now examine the specification of factorization as outlined in \Cref{sec:background}, and investigate how {\sklmc} estimates the count of Skolem functions meeting that specification.

\begin{enumerate}
    \item In line~\ref{algline:eqn}, {\sklmc} constructs $G$ such that $|\solproj{G}{X}| = 7$, $\solproj{G}{X} = \{ 12, 16,18,20,24,28,30\}$.
    \item In line~\ref{algline:sample}, it samples $\sigma$ from $\solproj{G}{X}$. Let's consider $\sigma = 30$. Then $\solcondsig_{\downarrow Y_0,Y_1} = \{(2,15),(3,10),(5,6)\}.$ Therefore, $c = \log(3)$ in line~\ref{algline:count}.
    \item Suppose in the next iteration it samples $\sigma = 16.$  Then $\solcondsig = \{(2,8),(4,4)\}.$ Therefore, $c = \log(2)$ in line~\ref{algline:count}.
    \item Now suppose that the termination condition of line
    \ref{algline:loops} is reached. At this point, the estimate {\est} returned from line~\ref{algline:gets2est} will be $\approx \frac{(\log(3) + \log(2))}{2} \times 7 \approx 6.$
    \item Finally, {\sklmc} will return the value {\est = 6.}

\end{enumerate}

Note that the approach is in stark contrast to the state-of-the-art counting techniques in the context of propositional models, which either construct a compact representation of the entire solution space or rely on the enumeration of a small number of models.

\subsection{Analysis of {\sklmc}} \label{sec:analysis}
Let $F(X,Y)$ be a propositional CNF formula over variables $X$ and $Y$. In the section we'll show that {\sklmc} works as an approximate counter for the number of Skolem functions.
We create a formula $G(X,Y,Y') = F(X,Y) \wedge F(X,Y') \wedge (Y \neq Y')$ from $F(X,Y)$, where $Y'$ is a fresh set of variables and $m = |Y'|$. We show that, if we pick a solution from $G(X,Y,Y') = F(X,Y) \wedge F(X,Y') \wedge (Y \neq Y')$, then the assignment to $X$ in that solution to $G$  will have at least two solutions in $F(X,Y)$.

\begin{lemma}\label{lemma:g}
\begin{equation*}
\small
\projectsatisfying{G}{X} = \left\{ \sigma \mid \sigma \in 2^{X} \wedge |\solcondsig| \geq 2 \right\}
\end{equation*}
\end{lemma}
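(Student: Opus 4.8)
The plan is to prove the set equality $\projectsatisfying{G}{X} = \{\sigma \in 2^X \mid |\solcondsig| \geq 2\}$ by a routine double inclusion, unpacking the definition of $G(X,Y,Y') = F(X,Y) \wedge F(X,Y') \wedge (Y \neq Y')$.

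For the forward inclusion, suppose $\sigma \in \projectsatisfying{G}{X}$. Then there is a full model of $G$ that assigns $X = \sigma$; call the corresponding assignments to the other variables $\pi$ (for $Y$) and $\pi'$ (for $Y'$). Since this model satisfies all three conjuncts of $G$, we get $F(\sigma,\pi) = 1$, $F(\sigma,\pi') = 1$, and $\pi \neq \pi'$ (the last because the model satisfies $Y \neq Y'$, and $Y, Y'$ are disjoint variable sets). Hence $\pi$ and $\pi'$ are two distinct elements of $\solcondsig$ projected to $Y$, so $|\solcondsig| \geq 2$. (Here I would be slightly careful about whether $|\solcondsig|$ counts models over $Y$ alone or over all remaining variables — given the notation in the preliminaries, $\solcond{\sigma}$ denotes models of $F$ with $X = \sigma$, i.e.\ assignments to $Y$, so two distinct $Y$-assignments give count $\geq 2$.)

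For the reverse inclusion, suppose $|\solcondsig| \geq 2$. Then there exist two distinct assignments $\pi \neq \pi'$ to $Y$ with $F(\sigma,\pi) = 1$ and $F(\sigma,\pi') = 1$. Construct the assignment to $X \cup Y \cup Y'$ that sets $X = \sigma$, $Y = \pi$, and $Y' = \pi'$. This satisfies $F(X,Y)$ (first conjunct, via $F(\sigma,\pi)=1$), satisfies $F(X,Y')$ (second conjunct, via $F(\sigma,\pi')=1$, relabeling $\pi'$ on the fresh copy $Y'$), and satisfies $Y \neq Y'$ since $\pi \neq \pi'$. Thus this is a model of $G$ whose projection to $X$ is $\sigma$, so $\sigma \in \projectsatisfying{G}{X}$.

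There is no real obstacle here; the only thing that needs a moment's attention is the bookkeeping between "$Y \neq Y'$" as a Boolean constraint and "$\pi \neq \pi'$" as distinctness of assignments, and the fact that $Y'$ being a fresh copy means any satisfying $Y$-assignment of $F$ can be copied onto $Y'$ without interaction. I would state both directions crisply and note that this lemma, combined with \Cref{prop:logsk}, justifies sampling $\sigma$ via $\ausampor(G, X, \eps)$ in \Cref{alg:skolemmc}, since $\projectsatisfying{G}{X} = S_2$.
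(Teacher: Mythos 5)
Your proof is correct and follows essentially the same route as the paper's: both directions are established by unpacking the conjuncts of $G$, extracting the two distinct $Y$-assignments from a model of $G$ for the forward inclusion, and assembling a model $\langle \sigma, \gamma_1, \gamma_2 \rangle$ of $G$ from two distinct solutions of $F \wedge (X = \sigma)$ for the reverse. The extra care you take about projection to $Y$ and the freshness of $Y'$ is sound and matches the paper's intent.
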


\begin{proof}
    We can write the statement alternatively as,
\begin{equation*}
\small
    \sigma \in \mathsf{Sol}(G,X) \iff |\solcondsig| \geq 2
\end{equation*}
    $(\implies)$ For every element $\sigma \in \mathsf{Sol}(G)$, we write $\sigma$ as  $\langle \sigma_{\downarrow X},\sigma_{\downarrow Y}, \sigma_{\downarrow Y'} \rangle$. Now according to the definition of $G$, both $\langle \sigma_{\downarrow X}, \sigma_{\downarrow Y} \rangle$ and $\langle \sigma_{\downarrow X}, \sigma_{\downarrow Y'} \rangle$ satisfy $F$. Moreover, $\sigma_{\downarrow Y}$ and $\sigma_{\downarrow Y'}$ are not equal. Therefore, $ |\solcondsig| \geq 2$.

    $(\impliedby)$  If $ |\solcondsig| \geq 2$, then $F(X,Y)$ has  solutions of the form $\langle \sigma, \gamma_1 \rangle$ and $\langle \sigma, \gamma_2 \rangle$, where $\gamma_1 \neq \gamma_2$. Now  $\langle \sigma \gamma_1 \gamma_2 \rangle$ satisfies $G$.
\end{proof}

\begin{theorem} \label{th:main}
{\sklmc} takes in input $F(X,Y)$, $\varepsilon > 0$, and $\delta \in (0,1]$, and returns $\est$ such that
\begin{align*}
	&\Pr\left[{(1-\varepsilon){\ell}} \leq \est \right. \left. \leq (1+\varepsilon) \ell \right] \geq 1-\delta
\end{align*}
 	 where, $\ell = \log(\countpsix)$. Furthermore, it makes $\tilde{O}\left(\frac{m}{\varepsilon^2} \ln \frac{2}{\delta}  \right)$ many calls to a {\sat} oracle, where $\tilde{O}$ hides poly-log factors in parameters $m, n, \varepsilon, \delta$.

\end{theorem}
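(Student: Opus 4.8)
The plan is to combine the structural identity from \Cref{prop:logsk} with the Stopping Rule Theorem, carefully tracking how the sampling and counting oracle errors propagate, and finally converting oracle calls to \sat-oracle calls. Concretely, \Cref{prop:logsk} together with \Cref{lemma:g} tells us that $\ell = \log\countpsix = \sum_{\sigma\in S_2}\log|\satisfying{\fcondsig}| = |S_2|\cdot\mathbb{E}_{\sigma}\big[\log|\satisfying{\fcondsig}|\big]$, where the expectation is over a \emph{uniform} $\sigma$ drawn from $S_2 = \projectsatisfying{G}{X}$. So if the sampler were exactly uniform and the model counter exact, the random variable $Z = \log|\satisfying{\fcondsig}|/m$ would lie in $[0,1]$ (since $|\satisfying{\fcondsig}|\le 2^m$ and $\ge 2$, actually $\ge 1$, so $Z\in(0,1]$), have mean $\mu = \ell/(m|S_2|)$, and the Stopping Rule algorithm would output $\est/(m\,g)$ within a $(1\pm\varepsilon_f)$ factor of $\mu$; multiplying by $m$ and by $g\approx|S_2|$ recovers $\ell$.

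The main work — and the main obstacle — is accounting for the three sources of error: (i) the almost-uniform sampler uses tolerance $\varepsilon_s$, so the $Z_t$ are not i.i.d.\ with the right mean; (ii) each approximate model count is only correct to a $(1+\varepsilon_c)$ factor with failure probability $\delta_c$, so $Z_t$ is perturbed additively by at most $\log(1+\varepsilon_c)/m$; and (iii) the final count $g$ of $|S_2|$ is only a $(1+\varepsilon_g)$-approximation with failure probability $\delta_g$. For (i), I would argue that the bias of an almost-uniform sampler shifts the mean of $Z_t$ by at most a $(1+\varepsilon_s)$ multiplicative factor, so the Stopping Rule still estimates a quantity within $(1+\varepsilon_s)$ of $\mu$; one must check the Stopping Rule Theorem's hypotheses still apply (the $Z_t$ are i.i.d.\ given that the sampler's distribution is fixed, just not exactly uniform). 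For (ii), the additive perturbation of each $Z_t$ is at most $\log(1+\varepsilon_c)/m$, so after multiplying back by $m$ and $g$ the total additive error is at most $g\log(1+\varepsilon_c)$; the check on line~\ref{algline:countcheck} guarantees this is at most $0.1\,\est$, i.e.\ within a $(1\pm 0.1)$-ish factor, and if not the algorithm safely returns $\bot$. For (iii), $g$ contributes another $(1+\varepsilon_g)$ factor. I would then chain these: $(1+\varepsilon_f)(1+\varepsilon_s)(1+\varepsilon_g)\cdot(1+0.1\text{-term})$ and verify, using $\varepsilon_f = 0.6\varepsilon$, $\varepsilon_s = 0.2\varepsilon$, $\varepsilon_g = 0.1\varepsilon$, that the product is within $1+\varepsilon$ (and symmetrically for the lower bound $1-\varepsilon$); here one uses that $\ell = \est/(\text{error factors})$ and that $\log$ of a product of $(1+\text{small})$ terms is dominated by the sum, so the final multiplicative guarantee on $\ell$ holds. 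The failure probability is bounded by a union bound: $\delta_f$ for the Stopping Rule, $\delta_g$ for the final count, and $t$-many (at most $ms$, hence the choice $\delta_c = 0.4\delta/(ms)$) for the per-iteration counting calls, plus whatever probability the sampler's tolerance is folded into — summing to at most $\delta$.

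For the complexity bound, I would first bound the number of iterations $t$ of the main loop. Since each $Z_t\in(0,1]$ and the loop runs until $x = \sum Z_t \ge s = 4\ln(2/\delta_f)(1+\varepsilon_f)/\varepsilon_f^2$, we need $t\le s/\mu$ in expectation but also deterministically the loop terminates; more usefully, standard Stopping Rule analysis bounds $t = O(s/\mu) = \tilde O(m|S_2|\log(1/\delta)/(\ell\varepsilon^2))$, but since we only care about the number of \emph{oracle} calls we can bound $t = O(s) = \tilde O(\frac{1}{\varepsilon^2}\ln\frac{1}{\delta})$ when $\mu$ is bounded below (each $Z_t\ge 1/m$ so $t\le ms$), giving $t = O(ms)$. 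Each iteration makes one almost-uniform sampling call on $G$ (which has $O(n+m)$ variables) and one approximate model counting call; by the cited results of \cite{CMV16,DM22}, each such call costs $\tilde O(\log(n+m))$ \sat-oracle calls (with the $\varepsilon,\delta$ dependence hidden in $\tilde O$ along with poly-log factors). The final counting call on $G$ adds another $\tilde O(\log(n+m))$. Multiplying $t = \tilde O(m\varepsilon^{-2}\ln(2/\delta))$ by $\tilde O(1)$ \sat-oracle calls per iteration yields the claimed $\tilde O\!\left(\frac{m}{\varepsilon^2}\ln\frac{2}{\delta}\right)$ total. I expect the error-composition bookkeeping in the correctness part — in particular handling the sampler's bias simultaneously with the Stopping Rule's guarantee and the additive counting error — to be the most delicate step, and the place where the specific constants $0.6, 0.2, 0.1$ in the algorithm must be used exactly.
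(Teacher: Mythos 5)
Your proposal follows essentially the same route as the paper: the same decomposition into four error sources (Stopping Rule tolerance $\varepsilon_f$, sampler bias $\varepsilon_s$, per-iteration additive counting error controlled by the line~\ref{algline:countcheck} check, and the $\varepsilon_g$ error on $g$), the same union bound $\delta_f + \delta_g + t\delta_c \leq \delta$ with $t \leq ms$, and the same complexity argument bounding the loop by $O(ms)$ iterations since each increment is at least $1/m$, with $\tilde{O}(\log(n+m))$ {\sat}-oracle calls per oracle invocation. The only cosmetic difference is that you chain the error factors multiplicatively where the paper sums them additively against $\sum_{\sigma \in S_2}\lc$, which is the same first-order bookkeeping.
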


\begin{proof}
    We prove the correctness and complexity of the algorithm in two different parts.
    \subsection*{Proof for Correctness}
    There are four sources of error in the approximation of the count. In the following table we list those. We'll use the following shorthand notation {\lc}  for $\log( |\sol(F \wedge X = \sigma)|)$.

\begin{table}[!htb]
    \centering
    \renewcommand{\arraystretch}{1.8}
    \begin{tabular}{p{5.6cm}|c}
        \toprule
        Reason of Error & Estimated Amount \\ \midrule
        Approximation by {\sklmc} $(E_F)$ &   $\varepsilon_f \siglc $    \\
        Non-uniformity in Sampling  $(E_S)$ & $\varepsilon_s \siglc $ \\
        Approximation by {\apxcountor}  in line~\ref{algline:count}$(E_M)$ &  $|S_2|\log(1+\varepsilon_c)$ \\
        Approximation by {\apxcountor} in line~\ref{algline:getgcnt}  $(E_G)$ &  $\varepsilon_g \siglc + \varepsilon_g E_M$ \\ \bottomrule
    \end{tabular}
\end{table}

In the following part, we will establish the significance of the bounds on individual errors, culminating in a demonstration of how these errors combine to yield the error bound for {\sklmc}.

\subsubsection*{(1) Bounding the error from \sklmc}

For a given formula $\exists Y F(X,Y)$, we partition $2^X$, the assignment space of $X$, into three subsets, $S_0, S_1, S_2$ as discussed in~\Cref{subsec:coreidea}.
In line~\ref{algline:loops} of \Cref{alg:skolemmc}, we pick an assignment to $X$ almost uniformly at random from $\solproj{G}{X}$. Let $\sigma_{i}$ be the random assignment picked at $i^{th}$ iteration of the for loop. For each of the random event of picking an assignment $\sigma_{i}$, we define a random variable $W_i$ such that, $W_i$ becomes a function of $\sigma_{i}$ in the following way:\\
    $$ W_i =  \displaystyle\frac{\log(|\solcond{\sigma_{i}}|)}{m}$$
    Therefore, we have $\iter$ random variables $W_1,W_2, \dots W_{\iter}$  on the sample space of $S_2$.
    Now as $\sigma_{i} $ is picked from $\solproj{G}{X}$ at line \ref{algline:eqn} of \sklmc, from \Cref{lemma:g} we know, $ 2 \leq |\solcond{\sigma_{i}}|$.
    Therefore the following holds,
    \begin{align*}
        \forall i, 2 \leq |\solcond{\sigma_{i}}| \leq 2^{m} \\
    \end{align*}
    Hence, from the definition of $W_i$:
    \begin{align*}
        \forall i, \frac{1}{m} \leq W_i \leq 1
    \end{align*}
Now assume that the model counting oracle of line~\ref{algline:count} returns an exact model count instead of an approximate model count; and the value retured at line~\ref{algline:gets2est} is {\aest} instead of {\est}.  Now, due to approximation by {\sklmc}, according to \Cref{th:stop} (stopping rule theorem),
\begin{equation}\label{eq:dklr}
    \Pr \left[ \left| \frac{\aest}{m|S_2|} - \mu \right| \leq \varepsilon_f \mu \right] \geq 1 - \delta_f
\end{equation}

\subsubsection*{(2) Bounding the error from sampling oracle}

Let $p_j$ be the probability an assignment $\sigma_j \in S_2$ appears. Then,
$$\mu = E[W_i] = \sum_{\sigma_{j} \in S_2} p_j \frac{\mathsf{LC}(\sigma_j)}{ m}$$
\\Now from the guarantee provided by the almost  oracle, we know
$$\displaystyle \frac{1}{\oeps |S_2|} \leq p_j \leq \frac{\oeps}{|S_2|}$$
Therefore,
$$\displaystyle \frac{\siglc}{ m|S_2| \oeps } \leq \mu  \leq \frac{\oeps \siglc}{m |S_2|}$$
\begin{equation} \label{eq:limmu}
    \left|m|S_2|\mu  - \siglc \right| \leq E_S
\end{equation}

\subsubsection*{(3) Bounding the error from counting oracle in line~\ref{algline:count}}

We'll use the following shorthand notation {\la}  for $\log( \apxcountor(F \wedge X = \sigma, \varepsilon_c, \delta_c))$.
From properties of {\apxcountor} oracle, for arbitrary $\sigma$,
$$\Pr [|\lc -\la| \leq \log(1+\varepsilon_c)] \geq 1 - \delta_c$$
\\Let $S_s \subseteq S_2$ be the set of $t$ samples picked up at line~\ref{algline:sample}. ($t$ is the number of iterations taken by the algorithm, as in line~\ref{algline:inct}):
\begin{equation*}
    \Pr \left[ \left| \sigsla - \sigslc \right| \leq  t\log(1+\varepsilon_c) \right] \geq 1 - t\delta_c
\end{equation*}
\\Now $\aest = \frac{|S_2|}{t}\sigslc$ and let $\est' = \frac{|S_2|}{t}\sigsla$. Therefore,
\begin{equation}\label{eq:limcnt}
    \Pr \left[ \left| \aest- \est' \right| \leq E_M \right] \geq 1 - t\delta_c
\end{equation}

\subsubsection*{(4) Bounding the error from counting oracle in line~\ref{algline:getgcnt}}

From the property of {\apxcountor} oracle running with parameters $(\varepsilon_g, \delta_g)$, we get
\begin{align*}
    \Pr\left[\frac{|S_2|}{1+\varepsilon_g} \leq g \leq (1+\varepsilon_g) |S_2| \right] \geq 1-\delta_g
\end{align*}
Now, $\est =  \displaystyle\frac{g}{t} \displaystyle \sum_{\sigma \in S_{s}} LA(\sigma)$, which gives us:
    \begin{align*}
\Pr \left[ \frac{\est'}{1 + \varepsilon_{g}} \leq \est \leq (1 + \varepsilon_{g}) \est' \right] \geq 1 - \delta_{g}
\end{align*}
\begin{equation}\label{eqn:gcounterror}
    \Pr \left[ |\est - \est'| \leq (1 + \varepsilon_{g}) \est' \right] \geq 1 - \delta_{g}
\end{equation}
Now, combining \Cref{eqn:gcounterror} with \Cref{eq:limcnt}, we get
\begin{equation}\label{eqn:appmcerrcomb}
    \Pr \left[ |\est - \aest| \leq E_M + E_G \right] \geq 1 - \delta_{g} - t\delta_c
\end{equation}

\subsubsection*{(5) Summing-up all the errors}

Combining \Cref{eq:dklr} and (\ref{eq:limmu})
     \begin{equation}\label{eq:three}
     \Pr \left[ \left| \aest - \siglc \right| \leq E_S + E_F \right] \geq 1 - \delta_f
 \end{equation}
\\Combining \Cref{eqn:appmcerrcomb} and (\ref{eq:three}), putting $\delta = \delta_f + \delta_g + t\delta_c$
    \begin{equation}\label{eq:five}
        \Pr \left[ \left| \est - \siglc \right| \leq E_S + E_M + E_F + E_G\right] \geq 1 - \delta
    \end{equation}
According to \Cref{alg:skolemmc},  $\varepsilon_s = 0.2\varepsilon$  and $\varepsilon_f = 0.6\varepsilon$.
Now, line ~\ref{algline:countcheck} asserts that $E_M \leq 0.1\varepsilon \sigxlc$. That makes $E_S + E_M + E_F + E_G \leq \varepsilon \siglc $. %
Therefore,
 $$\Pr \left[ \left| \est - \sigxlc \right| \leq \varepsilon \sigxlc \right] \geq 1 - \delta$$
 Or,
$$\Pr\left[{(1-\varepsilon){\ell}} \leq \est \leq (1+\varepsilon) \ell \right] \geq 1-\delta$$
where $\ell = \log(\countpsix)$.

\subsection*{Proof for Complexity}

    The formula $G$ in line~\ref{algline:eqn} of {\sklmc} can be constructed with a time complexity that is a polynomial in $|F|$. As the minimum value for $c$ in line~\ref{algline:count} is 1, the for loop, located between lines \ref{algline:loops} and \ref{algline:loopn}, is executed a maximum of $\iter = \left(  4m \ln (2/\delta_f)(1+\varepsilon_f)/\varepsilon_f^2\right)$ times.   During each iteration, the algorithm makes one call to an approximate counting oracle and an almost uniform sampling oracle, resulting in a total of $\iter$ calls to these oracles. In line~\ref{algline:getgcnt}, we invoke approximate model counting oracle once more. Following ~\cite{S83} and ~\cite{DM22}, $\log(m+n)$ many calls to a SAT oracle suffices to give both approximate counting and almost uniform sampling. Therefore, {\sklmc} returns within %
    $\tilde{O}\left(\frac{m}{\varepsilon^2} \ln \frac{2}{\delta}  \right)$ many calls to a {\sat} oracle, where $\tilde{O}$ hides poly-log factors in parameters $m, n, \varepsilon, \delta$.
    \end{proof}

\section{Experiments} \label{sec:results}

We conducted a thorough evaluation of the performance and accuracy of results of the {\sklmc} algorithm by implementing a functional prototype in C++.
The following experimental setup was used to evaluate the performance and quality of results of the {\sklmc} algorithm\footnote{All benchmarks and experimental data are available at \url{ https://doi.org/10.5281/zenodo.10404174}}.

\paragraph{Baseline.}

\cam{To assess the effectiveness of {\sklmc}, we conducted comparisons with the tool {\qc}~\cite{PMS23}, which is designed for level-2 solution counting. However, given that {\qc} is restricted to true QBFs, we aim to develop an additional baseline counter to broaden our evaluation scope.}

A possible approach to count Skolem functions, following \Cref{lemma:sk}, is given in \Cref{alg:baseline}. The {\excountor}$(F)$ oracle denotes an invocation of exact model counter. We implemented that to compare with {\sklmc}. In the implementation, we relied on the latest version of {\ganak}~\cite{SRMM19} to get the necessary exact model counts. We use a modified version of the SAT solver $\mathsf{CryptoMiniSat}$~\cite{SNC09} as $\mathsf{AllSAT}$ solver to find all solutions of a given formula, projected on $X$ variables. We call this implementation {\base} in the following part of the paper.

\begin{algorithm}[!h]
    \caption{\base $(F(X,Y))$}\label{alg:baseline}
    \begin{algorithmic}[1]

        \State $\est \gets 0$

        \State $G(X,Y,Y') := F(X,Y) \wedge F(X,Y') \wedge (Y \neq Y')$

        \State $\mathsf{SolG} \gets \mathsf{AllSAT}(G,X)$

        \For{\textbf{each} $\sigma \in \mathsf{SolG}$}

        \State $c\leftarrow \log( {\excountor}(F \wedge (X = \sigma)))$

        \State $\est \leftarrow \est + c $
        \EndFor

        \State\Return $\est$

    \end{algorithmic}
\end{algorithm}

\paragraph{Environment.}  All experiments were carried out on a cluster of nodes consisting of AMD EPYC 7713 CPUs running with 2x64 real cores. All tools were run in a single-threaded mode on a single core with a timeout of 10 hrs, i.e., 36000 seconds. A memory limit was set to 32 GB per core.
\paragraph{Parameters for Oracles and Implementation.} In the implementation, we utilized different state-of-the-art tools as counting and sampling oracles, including {\unisamp}~\cite{DM22} as an almost uniform sampling oracle,
and the latest version of {\apxmc}~\cite{YM23} as an approximate counting oracle.
{\sklmc} was tested with $\varepsilon = 0.8$ and $\delta = 0.4$. That gave the following values to error and tolerance parameters for model counting and sampling oracles.
The almost uniform sampling oracle {\unisamp} is run with $\eps=0.16$. The approximate model counting oracle {\apxmc} in line~\ref{algline:count} was run with $\epc = 4\sqrt{2} - 1$ and $\delta_c = \frac{0.4}{m \cdot s}$, where $s$  comes from the algorithm, based on input $(\varepsilon,\delta)$  and $m$ is number of output variables in the specification.
We carefully select error and tolerance values $\varepsilon_s, \varepsilon_c, \delta_c$ for counting and sampling oracles to ensure the validity of final bounds for {\sklmc} while also aiming for optimal performance of the counter based on these choices. The relationship between these values and the validity of bound of {\sklmc} is illustrated in the proof of \Cref{th:main}.

In our experiments, we sought to evaluate the run-time performance and approximation accuracy of {\sklmc}. Specifically, the following questions guided our investigation:

\begin{enumerate}[font=\bfseries RQ, leftmargin=\widthof{[RQQQ]}+\labelsep]
    \item How does {\sklmc} scale in terms of solving instances and the time taken in our benchmark set?

    \item What is the precision of the {\sklmc} approximation, and does it outperform its theoretical accuracy guarantees in practical scenarios?

\end{enumerate}

\paragraph{Benchmarks.} To evaluate the performance of {\sklmc}, we chose two sets of benchmarks.
\begin{enumerate}
    \item \emph{Efficiency benchmarks.}   609 instances from recent works on Boolean function synthesis~\cite{GRM20, ACJS17}, which includes different sources: the Prenex-2QBF track of QBF Evaluation 2017 and 2018, disjunctive~\cite{ACJS17}, arithmetic~\cite{TV17} and factorization~\cite{ACJS17}.

    \item \emph{Correctness Benchmarks.} The benchmarks described in the paragraph above are too hard for the baseline algorithm \cam{and {\qc}} to solve. As~\Cref{subsec:efficiency} reveals, the number of instances solved by the baseline is just eight out of the 609 instances.  Therefore, to check the correctness of {\sklmc} (RQ2), we used a set of 158 benchmarks from SyGuS instances~\cite{GRM21}. These benchmarks have very few input variables $(m \leq 8)$, and takes seconds for {\sklmc} to solve.

\end{enumerate}

\paragraph{Summary of Results.}  {\sklmc} achieves a huge improvement over {\qc} or {\base} by resolving {\tot} instances in a benchmark set consisting of 609, while {\base} only solved 8, \cam{and {\qc} solved none.} The accuracy of the approximate count is also noteworthy, with an average error of a count by {\sklmc} of only 21\%.

\subsection{Performance of {\sklmc}}\label{subsec:efficiency}
We evaluate the performance of {\sklmc} based on two metrics: the number of instances solved and the time taken to solve the instances.

\begin{table}[!tb]
    \centering
    \begin{tabular}{p{0.18\textwidth}r}
        \toprule
        Algorithm   & \# Instances solved \\ \midrule
        {\base} & 8                   \\
        {\sklmc}    & {\tot}                  \\ \bottomrule
    \end{tabular}
    \caption{Instances solved (out of 609). }
    \label{tab:result}
\end{table}

\begin{figure}[!t]
    \centering
    \includegraphics[width=0.7\linewidth]{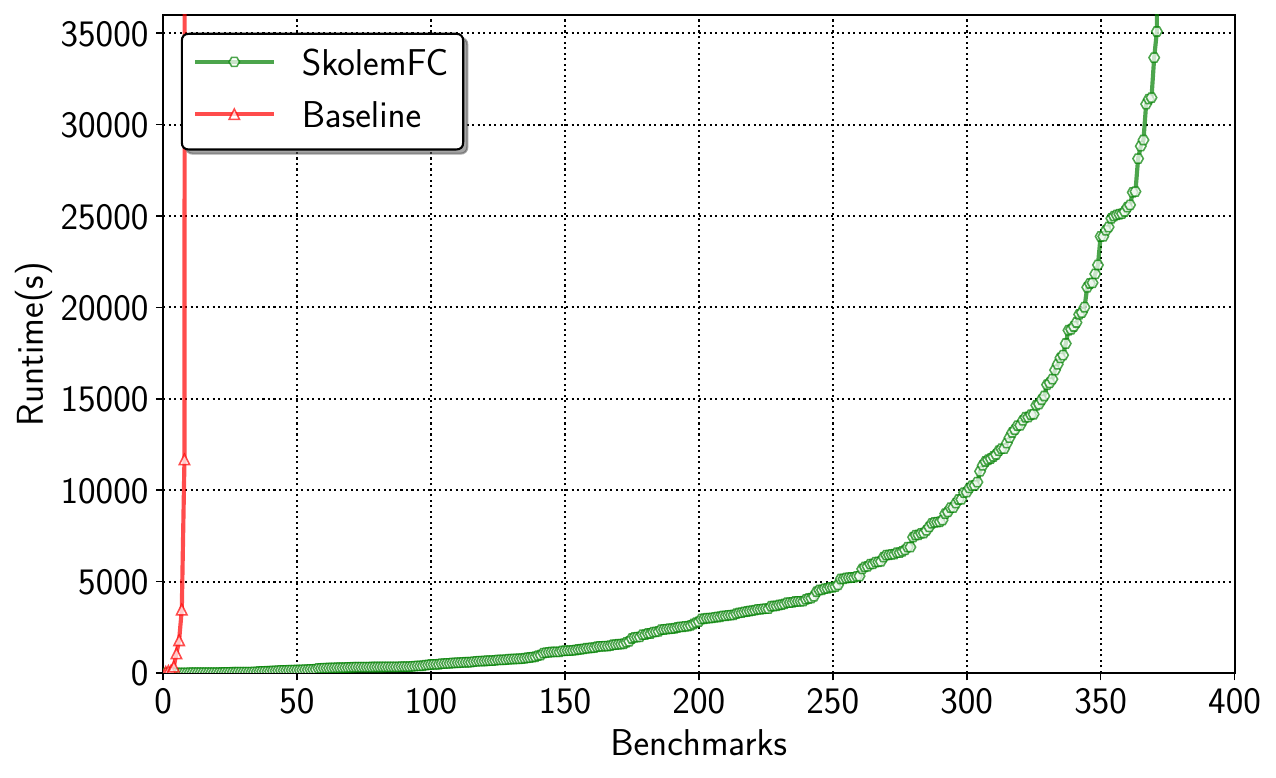}
    \caption{Runtime performance of {\sklmc} and {\base}. }
    \label{fig:cactus}
\end{figure}

\paragraph{Instances Solved.} In \Cref{tab:result}, we compare the number of benchmarks that can be solved by {\base} and {\sklmc}. First, it is evident that the {\base} only solved 8 out of the 609 benchmarks in the test suite, indicating its lack of scalability for practical use cases. Conversely, {\sklmc} solved {\tot} instances, demonstrating a substantial improvement compared to {\base}. \cam{{\qc} solves none of the instances from these benchmark sets. It is worth noting that on these benchmarks, there are many false QBFs, which are out of the scope of {\qc}.}

\paragraph{Solving Time Comparison.} A performance evaluation of {\base} and {\sklmc} is depicted in \Cref{fig:cactus}, which is a cactus plot comparing the solving time. The $x$-axis represents the number of instances, while the $y$-axis shows the time taken. A point $(i, j)$ in the plot represents that a solver solved $j$ benchmarks out of the 609 benchmarks in the test suite in less than or equal to $j$ seconds. The curves for {\base} and {\sklmc} indicate that for a few instances, {\base} was able to give a quick answer, while in the long run, {\sklmc} could solve many more instances given any fixed timeout.

\paragraph{Counter Call Comparison.} We analyze the algorithms' complexity in terms of counter calls, comparing {\base} and {\sklmc} across benchmarks in \Cref{fig:countercall}. The $x$ axis represents benchmarks, and the $y$ axis shows required counter calls, sorted by the increasing order of calls needed by {\base}. A red or green point $(i,j)$ signifies that {\base} or {\sklmc}, respectively, requires $j$ counting oracle calls for the $i^{th}$ instance.
{\base} requires up to a staggering $10^{230}$ counter calls for some instances, emphasizing the need for a scalable algorithm like {\sklmc}, which incurs significantly fewer counter calls.

\begin{figure}[tb]
    \centering
    \includegraphics[width=0.7\linewidth]{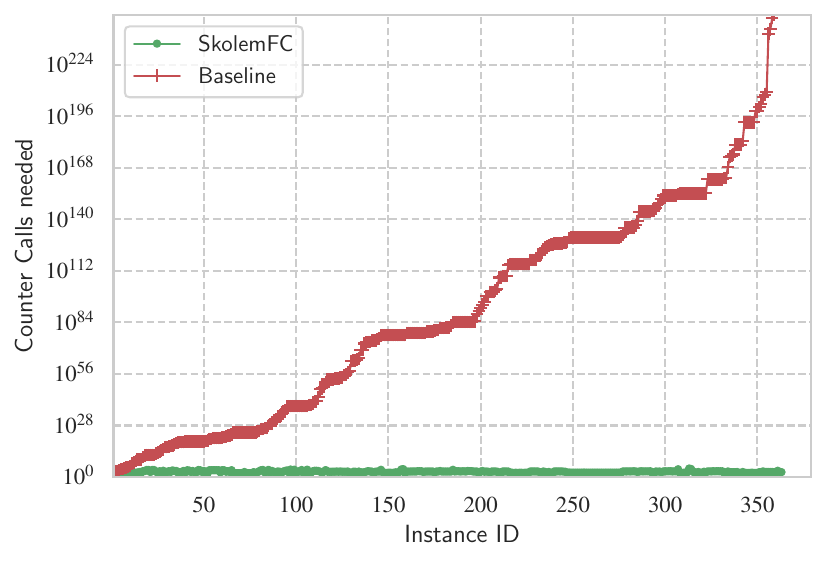}
    \caption{Counter calls needed by {\sklmc} and {\base} to solve the benchmarks.}
    \label{fig:countercall}
\end{figure}

We analyze the scalability of {\sklmc} by examining the correlation between average time per counter call and total counter calls, depicted in \Cref{fig:timeperiter}. The point $(i,j)$ means that if {\sklmc} needs $i$ counter calls, the average time per call is $j$ seconds. The figure showcases diverse scenarios: some with fewer iterations and longer durations per call, others with high counts and minimal time per call. %

\begin{figure}[!t]
    \centering
    \includegraphics[width=0.5\linewidth]{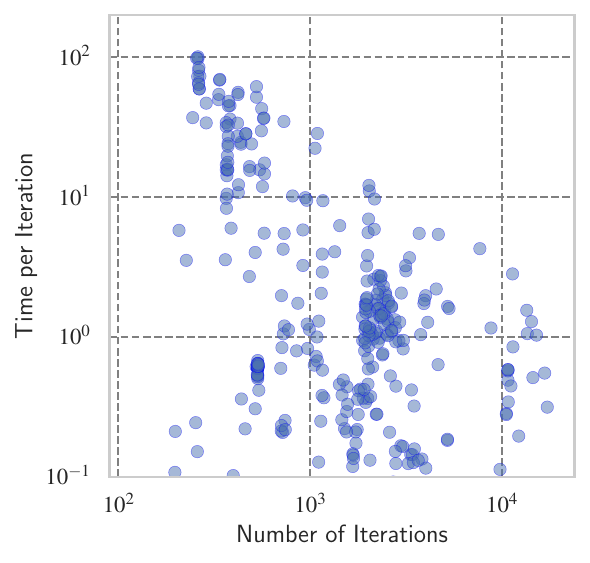}
    \caption{Relation between number of iterations needed by {\sklmc} and average time taken in each counter call.}
    \label{fig:timeperiter}
\end{figure}

\subsection{Quality of Approximation}

In the experiments, 158 accuracy benchmarks were measured using {\qc}, enabling comparison between {\qc} and {\sklmc} results, shown in \Cref{fig:count-comp}. The counts' close alignment and error reduction below theoretical guarantees were observed.
We quantify the {\sklmc} performance with error $e = \frac{|b - s|}{b}$, where $b$ is the count from {\qc} and $s$ from {\sklmc}. Analysis of all 158 cases found the average $e$ to be $0.005$, geometric mean $0.003$, and maximum $0.035$, contrasting sharply with a theoretical guarantee of $0.8$. This signifies {\sklmc} substantially outperforms its theoretical bounds.
Our findings underline {\sklmc}'s accuracy and potential as a dependable tool for various applications.

\begin{figure}[h!]
    \centering
    \includegraphics[width=0.7\linewidth]{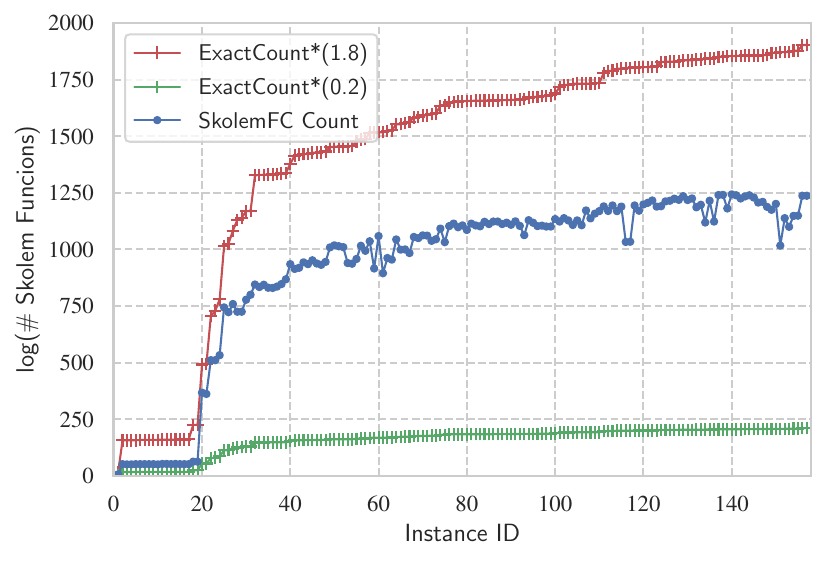}
    \caption{{\sklmc}'s estimate, vs. the theoretical bounds.}
    \label{fig:count-comp}
\end{figure}

\section{Conclusion} \label{sec:concl}

In conclusion, this paper presents the first scalable approximate Skolem function counter, {\sklmc}, which has been successfully tested on practical benchmarks and showed impressive performance. Our proposed method employs probabilistic techniques to provide theoretical guarantees for its results. The implementation leverages the progress made in the last two decades in the fields of constrained counting and sampling, and the practical results exceeded the theoretical guarantees. These findings open several directions for further investigation. One such area of potential extension is the application of the algorithm to other types of functions, such as counting uninterpreted functions in SMT with a more general syntax. This extension would enable the algorithm to handle a broader range of applications and provide even more accurate results. In summary, this research contributes significantly to the field of Skolem function counting and provides a foundation for further studies.

\section*{Acknowledgements}
We are thankful to Tim van Bremen, Priyanka Golia and Yash Pote for providing detailed feedback on the early drafts of the paper and grateful to the anonymous reviewers for their constructive comments to improve this paper. We thank  Martina Seidl, Andreas Plank and Sibylle Möhle  for pointing out an error in the first implementation of the code. This work was supported in part by National Research Foundation Singapore under its NRF Fellowship Programme [NRF-NRFFAI1-2019-0004], Ministry of Education Singapore Tier 2 Grant [MOE-T2EP20121-0011], Ministry of Education Singapore Tier 1 Grant [R-252-000-B59-114], and NSF awards IIS-1908287, IIS-1939677, and IIS-1942336. Part of the work was done during Arijit Shaw’s internship at the National University of Singapore. The computational work for this article was performed on resources of the National Supercomputing Centre, Singapore (\url{https://www.nscc.sg}).

\bibliographystyle{alpha}
\newcommand{\etalchar}[1]{$^{#1}$}

\end{document}